\crefname{figure}{Fig.}{Fig.} 
\crefname{table}{Table}{Tables} 
\crefname{equation}{Equation}{Equations} 
\crefname{section}{section}{Section} 
\newtheorem{theorem}{Theorem}
\def\BibTeX{{\rm B\kern-.05em{\sc i\kern-.025em b}\kern-.08em
    T\kern-.1667em\lower.7ex\hbox{E}\kern-.125emX}}
\begin{document}

\title{ATRO: A Fast Algorithm for Topology Engineering of Reconfigurable Datacenter Networks
}
\author{
    \IEEEauthorblockN{
        Yingming Mao\IEEEauthorrefmark{1}\IEEEauthorrefmark{2}, 
        Qiaozhu Zhai\IEEEauthorrefmark{1}, 
        Ximeng Liu\IEEEauthorrefmark{3}\IEEEauthorrefmark{4}, 
        Xinchi Han\IEEEauthorrefmark{3}, 
        Fanfan Li\IEEEauthorrefmark{1}, \\
        Shizhen Zhao\IEEEauthorrefmark{3}, 
        Yuzhou Zhou\IEEEauthorrefmark{1}, 
        Zhen Yao\IEEEauthorrefmark{5}, 
        and Xia Zhu\IEEEauthorrefmark{5}
    }
    \IEEEauthorblockA{\IEEEauthorrefmark{1}Xi'an Jiaotong University \quad \IEEEauthorrefmark{2}Shanghai Innovation Institute\\\IEEEauthorrefmark{3}Shanghai Jiao Tong University \quad  \IEEEauthorrefmark{4}Zhongguancun Academy
    \IEEEauthorrefmark{5}Huawei}
}

\maketitle
\insert\footins{\noindent\footnotesize 
Yingming Mao and Qiaozhu Zhai contributed equally to this work. \\
Corresponding author: Qiaozhu Zhai (qzzhai@sei.xjtu.edu.cn). \\
This work was supported by the Fundamental Research Funds for the Central Universities (xzy022025039) and by the Fundamental Research Funds for the Central Universities (xtr052025030).}
\begin{abstract}
Reconfigurable data center networks (DCNs) enhance traditional architectures with optical circuit switches (OCSs), enabling dynamic reconfiguration of inter-PoD links, i.e., the logical topology. Optimizing this topology is crucial for adapting to traffic dynamics but is challenging due to its combinatorial nature. The complexity increases further when demands can be distributed across multiple paths, requiring joint optimization of topology and routing. We propose \textbf{Alternating Topology and Routing Optimization (ATRO)}, a unified framework that supports both \emph{one-hop topology optimization} (where traffic is routed via direct paths) and \emph{multi-hop joint optimization} (where routing is also optimized). Although these settings differ in constraints, both are combinatorially hard and challenge solver-based methods. ATRO addresses both cases efficiently: in the one-hop case, it guarantees the global optimum via an accelerated binary search; in the multi-hop case, it alternates between topology and routing updates, with routing steps optionally accelerated by existing traffic engineering (TE) methods. ATRO supports warm-starting and improves solution quality monotonically across iterations. ATRO remains competitive even when paired with solver-free TE methods, forming a fully solver-free optimization pipeline that still outperforms prior approaches in runtime and maximum link utilization across diverse workloads.

\end{abstract}

\begin{IEEEkeywords}
Reconfigurable Data Center Networks, Optical Circuit Switches, Topology Engineering, Traffic Engineering.

\end{IEEEkeywords}

\section{Introduction}

The explosive growth of social networks and the rapid evolution of large language models (LLMs)—which demand massive GPU clusters—are placing unprecedented pressure on data center infrastructures~\cite{qianAlibabaHPNData2024}. In response, operators are scaling up their networks and constructing increasingly larger data centers~\cite{yuOrcaDistributedServing2022}. To improve scalability and efficiency, many operators are turning to optical circuit switches (OCSs), which enable reconfigurable and dynamic logical topologies~\cite{farringtonHeliosHybridElectrical2010}.

Reconfigurable DCNs adapt logical topologies to traffic patterns via two paradigms: \emph{multi-hop} and \emph{one-hop}. Multi-hop designs, e.g., Jupiter Evolving~\cite{poutievskiJupiterEvolvingTransforming2022}, rely on dynamic routing over relatively static structures. Conversely, one-hop designs~\cite{melletteRotorNetScalableLowcomplexity2017,ballaniSirius2020} reconfigure OCSs to directly connect Points of Delivery (PoDs), often bypassing routing decisions to minimize latency~\cite{farringtonHeliosHybridElectrical2010}. Consequently, \textit{topology optimization (TO)} dominates one-hop settings, while \textit{joint topology and routing optimization (TRO)} is vital for multi-hop networks.

These two scenarios correspond to optimization problems that share the same underlying structure, but differ in decision variable. Both one-hop and multi-hop settings can be modeled as Mixed Integer Nonlinear Programming (MINLP) problems~\cite{tehEnablingQuasiStaticReconfigurable2023,zhangGeminiPracticalReconfigurable2021}. In the multi-hop case, the joint Topology and Routing Optimization problem involves optimizing both topology and routing variables, leading to  approximately $\mathcal{O}(N^2)$ integer variables and $\mathcal{O}(N^3)$ continuous variables for a network comprising $N$ PoDs. For example, with $N = 128$, the problem includes more than 16 thousand integer and 2 million continuous variables, making it intractable for commercial solvers. In the one-hop case, the routing is fixed to direct paths, optimizing only the logical topology. This reduces the problem to $\mathcal{O}(N^2)$ integer variables and one continuous variable, but it remains complex due to combinatorial nature. More importantly, because one-hop optimization is typically used in latency-sensitive reconfiguration scenarios, it requires efficient algorithms that can deliver near-instantaneous solutions.

Prior work has addressed these challenges via relaxation and heuristic methods. COUDER~\cite{tehEnablingQuasiStaticReconfigurable2023} relaxes the joint problem into a linear program (LP) and rounds the solution to a feasible topology, but suffers from rounding errors due to LP relaxation and incurs high solver overhead, limiting scalability to large networks. TO-specific methods typically adopt either maximum-cost flow (MCF) formulations~\cite{zhaoMinimalRewiringEfficient} or Birkhoff–von Neumann (BvN) decomposition-based matching~\cite{liuSchedulingTechniquesHybrid2015,porterIntegratingMicrosecondCircuit2013}, which improve tractability but offer limited control over global metrics such as maximum link utilization (MLU). These limitations highlight the need for a unified, efficient, and flexible framework that supports both architectural paradigms.

Our key insight is that the core challenge in both TRO and TO lies in the combinatorial nature of topology decisions, where integer-valued link allocations must satisfy port and capacity constraints under given traffic demand.
To address this, we propose two complementary ideas.
First, we decompose TRO into two coordinated sub-problems: topology optimization (TO) and routing optimization (RO).
This decoupling enables an alternating optimization strategy, where TO is solved under fixed routing and RO under fixed topology.
Second, we observe that the TO subproblem exhibits a useful monotonicity: increasing the maximum link utilization (MLU) enlarges the feasible set of link allocations.
This property allows TO to be solved optimally via an efficient binary search, which significantly outperforms general-purpose solvers in runtime while preserving exact optimality.

Building on these insights, we propose \textbf{Alternating Topology and Routing Optimization (ATRO)}, a modular and scalable framework that alternates between TO and RO to progressively improve network performance.
ATRO guarantees monotonic improvement in MLU and supports warm-starting from any feasible initialization, including solutions produced by existing methods such as COUDER~\cite{tehEnablingQuasiStaticReconfigurable2023}.
To implement this framework, we develop the Accelerated Binary Search Method (ABSM) for TO and adopt Traffic Engineering (TE)~\cite{narayanan_solving_2021,alqiamTransferableNeuralWAN2024,Yang_Learning,mao2025fastsolverfreealgorithmtraffic,liu_figret_2023,xu_teal_2023,perry_dote_nodate}  accelerators for RO.
These components allow ATRO to efficiently support both one-hop and multi-hop scenarios; notably, when routing is performed using solver-free TE accelerators, ATRO operates entirely without invoking any commercial solver.

We extensively evaluate ATRO in both scenarios and find that it consistently outperforms previous methods in maximum link utilization (MLU) and runtime, achieving up to $10\times$ speed-ups while maintaining high solution quality. Its modular structure, scalability, and practical efficiency make it suitable for a wide range of deployment scenarios. To facilitate further research, our code is available at \cite{ymmao-xjtusiiYingmingMaoATRO2025}.

In summary, this paper makes three primary contributions:
\begin{itemize}
    \item We propose ATRO, a unified and modular framework for efficiently solving both topology-only and joint topology-routing optimization problems in reconfigurable DCNs.
    \item We develop ABSM, a fast and exact topology optimizer based on binary search, and integrate TE accelerators for scalable routing updates.
    \item We evaluate ATRO on both one-hop and multi-hop scenarios, demonstrating superior MLU and significantly faster runtime compared to prior methods.
\end{itemize}

\section{Problem Formulation and Key Insight}

\subsection{Problem Formulation}
\label{sec:Problem Formulation}

Similarly to~\cite{tehEnablingQuasiStaticReconfigurable2023,zhangGeminiPracticalReconfigurable2021}, we consider a reconfigurable DCN with $N$ P, modeled as a directed graph $G = (V, E)$, where $V$ is the set of Po and $E$ comprises all potential logical links. Each PoD $i$ has $R_i$ ports, each of capacity $S_i$, and the capacity of any logical link $(i,j)$ is $S_{i,j} = \min(S_i, S_j)$. The traffic demand from PoD $i$ to PoD $j$ is denoted $D_{i,j}$. Let $n_{i,j} \in \mathbb{Z}_{\ge 0}$ be the number of logical links from PoD $i$ to $j$ and each PoD $i \in V$ has $R_i$ ports. Similar to Google~\cite{poutievskiJupiterEvolvingTransforming2022}, we assume traffic is routed via at most two hops. We define a path as a triad $(s,k,d)$, where $s$ is the source, $d$ is the destination, and $k$ is either an intermediate relay (if $k \ne d$) or the destination itself (if directly routed). The set of all permissible paths is denoted $\mathcal{P}$, and for each source-destination pair $(i,j)$, we define the set of valid relay options as $\mathcal{K}_{ij} = \{k \mid (i,k,j) \in \mathcal{P} \}$. We introduce $f_{i,j,k}$ to denote the fraction of $i \to j$ traffic routed through node $k$ (with $k = j$ representing direct routing). Our goal is to jointly optimize the integer-valued logical topology ${n_{i,j}}$ and the continuous routing variables ${f_{i,j,k}}$ to minimize the maximum link utilization $u$. The joint topology and routing optimization (TRO) problem is formulated as \cref{eq:optimization_problem}.

Constraint~\eqref{eq:constraint1} limits the total number of logical links per PoD by available port count. Constraint~\eqref{eq:constraint2} enforces link symmetry, and requires that $n_{i,j} \in \mathbb{Z}_{\ge 0}$, i.e., each logical link count must be a non-negative integer. Constraints~\eqref{eq:constraint3}–\eqref{eq:constraint4} restrict routing to valid two-hop paths. Constraint~\eqref{eq:constraint5} bounds
link load by capacity. While our framework naturally generalizes to longer path-based routing~\cite{zhangGeminiPracticalReconfigurable2021}, we adopt the two-hop model for clarity and tractability.
\begin{subequations}
\label{eq:optimization_problem}
\begin{align}
    & \min_{f_{i,j,k},\ n_{i,j},\ u} \quad u 
     \label{eq:objective} \\
    \text{s.t.} \quad 
    & \sum_{\substack{j=1 , j\ne i}}^N n_{i,j} \le R_i, \quad \forall i, 
     \label{eq:constraint1} \\
    & n_{i,j} = n_{j,i}, \quad n_{i,j} \in \mathbb{Z}^+, \quad \forall i \ne j 
     \label{eq:constraint2}, \\
    & f_{i,j,k} \ge 0,\quad f_{i,j,k} = 0 \text{ if } k \notin \mathcal{K}_{ij},\quad \forall i,j,k 
     \label{eq:constraint3}, \\
    & \sum_{k \in \mathcal{K}_{ij}} f_{i,j,k} = 1, \quad \forall i \ne j 
     \label{eq:constraint4}, \\
    & \sum_{j'=1}^N f_{i,j',j} D_{i,j'} + \sum_{i'=1}^N f_{i',j,i} D_{i',j} 
    \le u  n_{i,j}  S_{i,j}, \quad \forall i \ne j.
     \label{eq:constraint5}
\end{align}
\end{subequations}

\noindent
\textbf{One-Hop Special Case.}
In the one-hop case, where all traffic is routed directly without intermediate relays, the routing variable $f_{i,j,k}$ becomes fixed (i.e., $f_{i,j,k} = \mathbf{1}_{k=j}
$). Under this setting, constraints~\eqref{eq:constraint3}–\eqref{eq:constraint5} collapse into a single per-link capacity constraint:
\[
D_{i,j} \le u \cdot n_{i,j} \cdot S_{i,j}, \quad \forall i \ne j.
\]
This yields a \textit{topology optimization (TO)} problem over $\{n_{i,j}\}$, which remains challenging due to its combinatorial nature.

\subsection{Limitations of Relaxation-Based Approaches}
\label{sec:Relaxation Limitations}

The TRO problem defined in~\eqref{eq:optimization_problem} is a mixed-integer nonlinear program (MINLP), due to the coupling of integer-valued topology variables $n_{i,j}$ and continuous routing variables $f_{i,j,k}$ via bilinear constraints (e.g., $u \cdot n_{i,j}$). A common approach, adopted by COUDER~\cite{tehEnablingQuasiStaticReconfigurable2023}, is to first linearize the formulation into a mixed-integer linear program (MILP), then relax it to an LP, and finally perform heuristic rounding to recover an integral topology.

However, this three-stage pipeline—MINLP $\rightarrow$ MILP $\rightarrow$ LP + rounding—has two key drawbacks. First, it fails to capture the combinatorial structure of topology selection, often producing fractional link allocations that cannot be feasibly rounded (Fig.\ref{fig:feasible_of_couder}). Second, even the relaxed LP becomes intractable at scale, containing millions of variables for large DCNs. These limitations also apply to simpler one-hop TO scenarios, where fast and reliable decisions are essential.

\begin{figure}[tbh]
    \centering
    \includegraphics[width=0.8\linewidth]{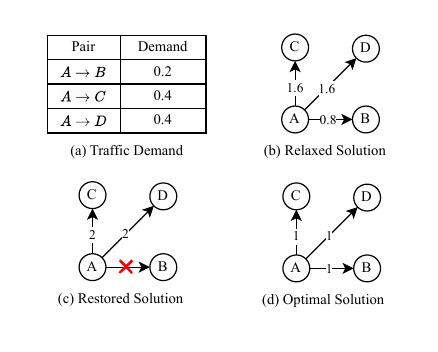}
    \caption{Illustration of infeasibility in the rounded TRO solution produced by COUDER. Assume each PoD has $R_i=4$ ports. Subfigure (b) shows a relaxed LP solution with fractional link allocations. Subfigure (c) demonstrates an infeasible rounding outcome where PoD A and B are disconnected. Subfigure (d) shows a feasible and optimal integer solution.}
    \label{fig:feasible_of_couder}
\end{figure}

\subsection{Key Insight and Theoretical Discussion}
\label{sec:atro_insight}

We propose ATRO, a scalable and fast framework for TRO, based on two core insights. 
First, the TRO problem can be decomposed into two interacting subproblems—topology optimization (TO) and routing optimization (RO)—that are coupled only through the capacity constraints~\cref{eq:constraint5}.
Specifically, given routing decisions $\{f_{i,j,k}\}$ and demands $D_{i,j}$, we define the total load on each logical link $(i,j)$ as:
\begin{equation}
T_{i,j} = \sum_{j'=1}^N f_{i,j',j} D_{i,j'} + \sum_{i'=1}^N f_{i',j,i} D_{i',j}.
\label{eq:linkload}
\end{equation}
This representation allows all flow-related constraints in the original TRO problem to be expressed through the per-link loads $\{T_{i,j}\}$, thereby decoupling the routing logic from the topology variables in the capacity constraint~\eqref{eq:constraint5}. Conversely, for fixed topologies $\{n_{i,j}\}$, the routing problem reduces to computing $\{f_{i,j,k}\}$ under fixed capacity constraints.  Second, the TO subproblem exhibits a monotonic feasibility structure: for fixed $\{T_{i,j}\}$, increasing the utilization threshold $u$ enlarges the feasible set of integer topologies. This property enables efficient binary search to identify the optimal $u$ without requiring general-purpose solvers.

\noindent
\textbf{Decomposition Strategy.}
We alternate between solving the following subproblems:

\textit{(i) Topology Optimization (TO):} Given fixed routing $\{f_{i,j,k}\}$ and corresponding $T_{i,j}$, solve:
\begin{equation}
\begin{aligned}
& \min_{n_{i,j},\ u} \quad u \\
\text{s.t.} \quad 
& \sum_{j \ne i} n_{i,j} \le R_i,\quad n_{i,j} = n_{j,i} \in \mathbb{Z}^+, \\
& T_{i,j} \le u \cdot n_{i,j} \cdot S_{i,j},\quad \forall i \ne j.
\end{aligned}
\label{eq:to_subproblem}
\end{equation}
We exploit monotonicity to perform binary search on $u$.

\textit{(ii) Routing Optimization (RO):} Given fixed topology $\{n_{i,j}\}$ and link capacities, solve:
\begin{equation}
\begin{aligned}
& \min_{f_{i,j,k},u} \quad u \\
\text{s.t.} \quad 
& f_{i,j,k} \ge 0,\quad \sum_{k \in \mathcal{K}_{ij}} f_{i,j,k} = 1,\quad \forall i \ne j, \\
    & \sum_{j'=1}^N f_{i,j',j} D_{i,j'} + \sum_{i'=1}^N f_{i',j,i} D_{i',j} 
    \le u  n_{i,j}  S_{i,j}, \quad \forall i \ne j. 
\end{aligned}
\label{eq:ro_subproblem}
\end{equation}
\cref{eq:ro_subproblem} is a standard traffic engineering problem that can be solved using LP solvers or accelerated using TE accelerator.

\label{sec:atro_advantage}
\noindent
\textbf{Theoretical Discussion.}
ATRO enjoys several desirable properties that stem from its decomposition-based design and alternating optimization structure. These include modularity, convergence guarantees, and robust performance.

\begin{itemize}
    \item \textbf{Modular decomposition.}  
    ATRO decouples topology and routing to simplify the joint optimization, allowing each subproblem to be solved independently via specialized techniques. This modularity naturally generalizes to multi-hop routing: while our TO component optimizes links based on aggregate loads, the RO component can seamlessly integrate any standard multi-hop TE algorithm.

    \item \textbf{Monotonic objective descent.}  
    Each TO update reduces the utilization threshold $u$ (or leaves it unchanged), and the subsequent RO step recomputes routing under the updated topology. This guarantees non-increasing objective values and convergence to a fixed point.

    \item \textbf{Synergy and Robustness.} ATRO converges to near-optimal solutions through the synergy of its components: TO reshapes the global topology while RO mitigates local congestion. In rare sparse-demand cases, ATRO may retain links that an exact solver would prune. This occurs because the routing step lacks an incentive to drive specific link loads to absolute zero, preventing the topology step from removing them. While slightly suboptimal in link count, this behavior enhances robustness by maintaining additional path diversity.

\end{itemize}

We next describe the algorithmic design of ATRO and how the alternating updates are coordinated.

\section{ATRO Design}
\label{sec:ATRO}

Building on the decomposition strategy introduced in~\S\ref{sec:atro_insight}, we now present the design of \emph{Alternating Topology and Routing Optimization (ATRO)}, a scalable and fast framework that alternately solves the topology and routing subproblems.

\subsection{Overview}
\label{sec:atro-overview}
ATRO consists of three coordinated components: \textit{TO}, \textit{RO}, and a \textit{Refinement} module. The \textit{TO Component} computes a logical topology minimizing MLU using an Accelerated Binary Search Method (ABSM). The \textit{RO Component} then determines routing decisions using general traffic engineering techniques. Between these two, the \textit{Refinement Component} opportunistically reallocates unused ports to alleviate link overloads, improving routing flexibility without increasing MLU. These modules form an iterative loop: routing informs topology updates and refinement, which in turn guide routing. For one-hop cases, ATRO simplifies to a single \textit{TO} invocation. Fig.~\ref{fig:atro-overview} illustrates the architecture.

\begin{figure}[t]
    \centering
\includegraphics[width=0.8\linewidth]{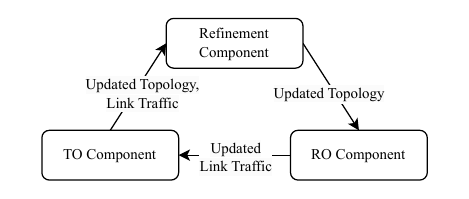}
    \caption{
    Overview of ATRO’s architecture. The framework iteratively coordinates three components—\textit{TO Component}, \textit{Refinement Component}, and \textit{RO Component}—through shared intermediate variables.
    }
    \label{fig:atro-overview}
\end{figure}

\subsection{TO Component}
\label{sec:to-component}

The Topology Optimization (TO) subproblem in ATRO is responsible for determining the logical topology $\{n_{i,j}\}$ that minimizes the MLU $u$, given the current traffic load $T_{i,j}$ and the port limits $R_i$. Formally, TO aims to find the minimum $u$ such that the total traffic $T_{i,j}$ can be delivered over a topology that satisfies port constraints and supports link capacities $S_{i,j}$. Since link counts $n_{i,j}$ are required to be integers, this is a combinatorial problem.

Our approach is based on the observation that for a given target utilization level $u > 0$, we can efficiently test whether a feasible topology exists by constructing a minimal link allocation that supports the traffic under that $u$. This leads to an efficient binary search scheme for solving TO.

We first formalize the feasibility condition that serves as the backbone of our algorithm in Theorem~\ref{thm:theorem1}.

\begin{theorem}[Feasibility of TO for fixed $u$]
\label{thm:theorem1}
Let $u > 0$ be a candidate MLU. Define the number of logical links required between each PoD pair $(i,j)$ as:
\begin{equation}
n^{(U)}_{i,j} = n^{(U)}_{j,i} = \left\lceil \max \left\{ \frac{T_{i,j}}{u \cdot S_{i,j}}, \frac{T_{j,i}}{u \cdot S_{j,i}} \right\} \right\rceil, \quad \forall i \ne j.
\label{eq:n_U_def}
\end{equation}
Then $u$ is a feasible solution to the TO subproblem if and only if
\begin{equation}
\sum_{j \ne i} n^{(U)}_{i,j} \le R_i, \quad \forall i.
\label{eq:feasibility_condition}
\end{equation}
\end{theorem}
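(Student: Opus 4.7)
The plan is a direct two-direction argument, since the theorem is essentially asserting that the pointwise lower envelope $n^{(U)}_{i,j}$ is simultaneously the \emph{minimum} feasible link allocation and the \emph{certificate} of feasibility. The structural ingredient is that the three constraints on $n_{i,j}$ other than the port budget (symmetry, integrality, and capacity in both directions) can each be reduced to a single pointwise lower bound on $n_{i,j}$, so the port-budget constraint decides everything.

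For the necessity (``only if'') direction, I would assume a feasible topology $\{n_{i,j}\}$ exists for the given $u$ and derive the stated inequality. The capacity constraint \eqref{eq:constraint5} restricted to the TO subproblem~\eqref{eq:to_subproblem} gives $n_{i,j} \ge T_{i,j}/(u\, S_{i,j})$. Applying the same bound to $(j,i)$ and invoking symmetry $n_{i,j}=n_{j,i}$ yields $n_{i,j} \ge \max\{T_{i,j}/(u S_{i,j}),\, T_{j,i}/(u S_{j,i})\}$. Since $n_{i,j}\in\mathbb{Z}^+$, taking the ceiling is monotone and preserves the inequality, so $n_{i,j} \ge n^{(U)}_{i,j}$ for every pair. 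Summing over $j\neq i$ and using the port constraint $\sum_{j\neq i} n_{i,j} \le R_i$ gives $\sum_{j\neq i} n^{(U)}_{i,j} \le R_i$.

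For the sufficiency (``if'') direction, I would simply exhibit $\{n^{(U)}_{i,j}\}$ itself as a feasible topology. By construction $n^{(U)}_{i,j}=n^{(U)}_{j,i}$ (symmetric in its defining expression) and $n^{(U)}_{i,j}\in\mathbb{Z}^+$ from the ceiling; the port constraints hold by the hypothesis~\eqref{eq:feasibility_condition}; and the capacity constraint $T_{i,j}\le u\cdot n^{(U)}_{i,j}\cdot S_{i,j}$ is immediate from $n^{(U)}_{i,j}\ge T_{i,j}/(u S_{i,j})$ after multiplying through by $u S_{i,j}>0$.

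There is no substantive obstacle here; the one spot I would be careful about is the interaction of the ceiling with symmetry, to make sure the ceiling is applied \emph{after} taking the max of the two directional demands (as in \eqref{eq:n_U_def}), so that the resulting $n^{(U)}_{i,j}$ is automatically symmetric and integer, rather than requiring a separate rounding step that could break either property. I would also briefly note, as a corollary, that $n^{(U)}_{i,j}$ is the componentwise minimum of any feasible allocation, which is exactly the structural fact that underpins the binary-search scheme in ABSM: enlarging $u$ can only shrink each $n^{(U)}_{i,j}$, so feasibility is monotone in $u$.
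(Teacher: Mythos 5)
Your proof is correct and follows essentially the same two-direction argument as the paper: necessity by deriving the pointwise lower bound $n_{i,j} \ge n^{(U)}_{i,j}$ from capacity, symmetry, and integrality and then summing against the port budget, and sufficiency by exhibiting $\{n^{(U)}_{i,j}\}$ itself as a feasible allocation. Your added observation that $n^{(U)}_{i,j}$ is the componentwise minimum of any feasible allocation is a nice explicit framing of the monotonicity the paper uses later, but the core argument is the same.
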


\begin{proof}
We prove both sufficiency and necessity.

\textbf{Sufficiency.} According to \eqref{eq:n_U_def}, we know:
\begin{equation}
n^{(U)}_{i,j} \ge \frac{T_{i,j}}{u \cdot S_{i,j}}, \quad \forall i \ne j,
\end{equation}
which implies:
\begin{equation}
u \cdot n^{(U)}_{i,j} \cdot S_{i,j} \ge T_{i,j}.
\label{eq:sufficiency_link_utilization}
\end{equation}
We consider the allocation \( n_{i,j} = n^{(U)}_{i,j} \). This yields a total capacity of \( n_{i,j} \cdot S_{i,j} \) between PoD \( i \) and \( j \), which, by \eqref{eq:sufficiency_link_utilization}, suffices to carry the traffic \( T_{i,j} \). If the port constraint in \eqref{eq:feasibility_condition} also holds, then the total number of links at each PoD \( i \) does not exceed its limit \( R_i \), and the resulting topology satisfies all constraints in \eqref{eq:to_subproblem}, making it feasible under \( u \).

\textbf{Necessity.} If $u$ is a feasible solution to the TO subproblem~\eqref{eq:to_subproblem} we know that there exists a group of $n_{i, j}$ such that all constraints in \eqref{eq:to_subproblem} are satisfied by $(u, n_{i, j})$. Therefore we have:
\begin{equation}
n_{i,j} \ge \frac{T_{i,j}}{u \cdot S_{i,j}}, \quad \text{and similarly} \quad n_{j,i} \ge \frac{T_{j,i}}{u \cdot S_{j,i}}.
\end{equation}
Due to the symmetry constraint $n_{i,j} = n_{j,i}$, we have:
\begin{equation}
n_{i,j} \ge \left\lceil \max \left\{ \frac{T_{i,j}}{u \cdot S_{i,j}}, \frac{T_{j,i}}{u \cdot S_{j,i}} \right\} \right\rceil = n^{(U)}_{i,j}.
\end{equation}
Therefore,
\begin{equation}
\sum_{j \ne i} n^{(U)}_{i,j} \le \sum_{j \ne i} n_{i,j} \le R_i,
\end{equation}
which proves the necessity of \eqref{eq:feasibility_condition}.
\end{proof}

Theorem~\ref{thm:theorem1} result enables closed-form feasibility checking for any given $u$ and yields the minimal link allocation $n^{(U)}_{i,j}$ satisfying both capacity and port constraints. Since feasibility is monotonic in $u$—i.e., if $u$ is feasible, any $u' > u$ is also feasible—binary search can efficiently find the minimum feasible value. Furthermore, each successful test returns a topology that allows us to tighten the upper bound: given $n^{(U)}_{i,j}$ from \eqref{eq:n_U_def}, the smallest $u$ supporting the current allocation satisfies:
\begin{equation}
u \ge \frac{T_{i,j}}{n^{(U)}_{i,j} \cdot S_{i,j}}, \quad \forall i \ne j.
\end{equation}
This yields a refined upper bound on feasible utilization:
\begin{equation}
\tilde{u} = \max_{i \ne j} \left\{ \frac{T_{i,j}}{n^{(U)}_{i,j} \cdot S_{i,j}} \right\}.
\label{eq:util_bound_refine}
\end{equation}

\begin{theorem}
\label{thm:absm_feasibility}
If $u$ is feasible for the TO subproblem, then $\tilde{u}$ computed via \eqref{eq:util_bound_refine} is also feasible.
\end{theorem}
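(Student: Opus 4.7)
The plan is to reduce the claim to the feasibility criterion already established in Theorem~\ref{thm:theorem1}. Concretely, I would define $n^{(U)}_{i,j}(\tilde u)$ as the ceiling‐based allocation from \eqref{eq:n_U_def} but evaluated at the refined threshold $\tilde u$, and then show the two inequalities
\[
n^{(U)}_{i,j}(\tilde u) \;\le\; n^{(U)}_{i,j}(u) \;\le\; R_i\text{-feasible},
\]
the second of which is immediate from the hypothesis that $u$ is TO-feasible. The conclusion then follows by one more invocation of Theorem~\ref{thm:theorem1}, applied at $\tilde u$.

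The key step is the pointwise domination $n^{(U)}_{i,j}(\tilde u) \le n^{(U)}_{i,j}(u)$. To prove it, I would start from the definition \eqref{eq:util_bound_refine}, which gives for every ordered pair $(i,j)$
\[
\tilde u \;\ge\; \frac{T_{i,j}}{n^{(U)}_{i,j}(u)\cdot S_{i,j}},
\]
so that $T_{i,j}/(\tilde u\,S_{i,j}) \le n^{(U)}_{i,j}(u)$. By symmetry of $n^{(U)}_{i,j}(u)$ and by applying the same inequality to $(j,i)$, the same upper bound $n^{(U)}_{i,j}(u)$ also dominates $T_{j,i}/(\tilde u\,S_{j,i})$. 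Since $n^{(U)}_{i,j}(u)$ is an integer, taking the maximum of the two ratios and then the ceiling preserves this bound, yielding $n^{(U)}_{i,j}(\tilde u) \le n^{(U)}_{i,j}(u)$.

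Summing over $j \ne i$ then gives $\sum_{j\ne i} n^{(U)}_{i,j}(\tilde u) \le \sum_{j\ne i} n^{(U)}_{i,j}(u) \le R_i$, which is precisely the feasibility condition \eqref{eq:feasibility_condition} at the threshold $\tilde u$. Theorem~\ref{thm:theorem1} therefore certifies that $\tilde u$ is TO-feasible. A brief closing remark would note the complementary fact $\tilde u \le u$ (since $n^{(U)}_{i,j}(u) \ge T_{i,j}/(u S_{i,j})$ implies each ratio in \eqref{eq:util_bound_refine} is at most $u$), which is what justifies calling $\tilde u$ a \emph{refined} upper bound and explains why it is safe to update the binary-search bracket to $\tilde u$ after a successful feasibility test.

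The main obstacle, and essentially the only nontrivial subtlety, is handling the ceiling and the symmetry together: one must argue that bounding both $T_{i,j}/(\tilde u S_{i,j})$ and $T_{j,i}/(\tilde u S_{j,i})$ by the same integer $n^{(U)}_{i,j}(u)$ survives the $\max$ and the $\lceil\cdot\rceil$. This is immediate once the integer nature of the dominating quantity is used, but it is the step where a careless argument could break. Edge cases such as $T_{i,j}=0$ (where $n^{(U)}_{i,j}(u)=0$ and the corresponding ratio is excluded from the maximum in \eqref{eq:util_bound_refine}) should be acknowledged to ensure $\tilde u$ is well defined whenever any demand is positive.
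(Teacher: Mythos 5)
Your proof is correct, but it takes a slightly different (and somewhat longer) route than the paper. The paper's argument simply reuses the allocation $n^{(U)}_{i,j}$ computed at $u$ as an explicit feasibility witness at $\tilde u$: since $\tilde u$ is by definition \eqref{eq:util_bound_refine} the maximum utilization of that very topology, every capacity constraint $T_{i,j}\le \tilde u\cdot n^{(U)}_{i,j}\cdot S_{i,j}$ holds with the \emph{same} integer allocation, and the port constraint is inherited from the feasibility of $u$ --- no re-application of the criterion in Theorem~\ref{thm:theorem1} is needed. You instead recompute the ceiling allocation at $\tilde u$, prove the pointwise domination $n^{(U)}_{i,j}(\tilde u)\le n^{(U)}_{i,j}(u)$ (correctly using the integrality of $n^{(U)}_{i,j}(u)$ to pass the bound through the $\max$ and the ceiling), and then invoke the if-and-only-if condition \eqref{eq:feasibility_condition} at $\tilde u$. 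Both arguments are sound; yours buys a little more, namely that the ABSM feasibility test re-run at $\tilde u$ would itself succeed and return a topology no denser than the one at $u$, and your closing observation that $\tilde u\le u$ is what actually licenses shrinking the binary-search bracket --- a point the paper leaves implicit. Your caveat about $T_{i,j}=0$ (excluding the degenerate ratio from the maximum so $\tilde u$ is well defined) is also more careful than the paper's treatment.
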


\begin{proof}
By definition of $n^{(U)}_{i,j}$ and the sufficiency proof of Theorem~\ref{thm:theorem1}, we have that if $u$ is feasible, then $n^{(U)}_{i,j}$ supports all traffic. The refinement in \eqref{eq:util_bound_refine} ensures that no link utilization exceeds $\tilde{u}$. Therefore, the same topology $n^{(U)}_{i,j}$ remains feasible under $\tilde{u}$.
\end{proof}

According to \Cref{thm:absm_feasibility}, we solve the TO subproblem via a binary search over the utilization threshold $u$, and accelerate convergence by refining the upper bound using \eqref{eq:util_bound_refine} whenever feasibility is confirmed. To initialize the search interval, we start from $u = 1$ and increment it step by step (e.g., by 1) until the feasibility condition in \Cref{thm:theorem1} is satisfied; the first feasible value is recorded as the initial upper bound $M$. This adaptive initialization avoids manual tuning and typically completes in under 5 iterations, with $M$ usually no greater than 5. Once the interval $[\underline{u}, \overline{u}] = [0, M]$ is established, we iteratively bisect it to test feasibility at the midpoint. If feasible, the upper bound is tightened to the implied minimum utilization $\tilde{u}$ computed from the current link allocation; otherwise, the lower bound is raised. This dynamic tightening is what accelerates the binary search. The full process is outlined in Algorithm~\ref{alg:absm}.


\begin{algorithm}[t]
\caption{Accelerated Binary Search Method (ABSM)}
\label{alg:absm}
\begin{algorithmic}[1]
\Require Traffic matrix $T_{i,j}$, link capacities $S_{i,j}$, port limits $R_i$, threshold $\varepsilon$
\Ensure Optimal link allocation $n_{i,j}$ and MLU $u$
\State Initialize bounds: $\underline{u} \gets 0$, $\overline{u} \gets M$
\While{$\overline{u} - \underline{u} > \varepsilon$}
    \State $u \gets (\underline{u} + \overline{u}) / 2$
    \State Compute $n^{(U)}_{i,j}$ using Eq.~\eqref{eq:n_U_def}
    \If{$\sum_{j \ne i} n^{(U)}_{i,j} \le R_i \quad \forall i$}
        \State $\tilde{u} \gets \max_{i \ne j} \frac{T_{i,j}}{n^{(U)}_{i,j} \cdot S_{i,j}}$
        \State $\overline{u} \gets \tilde{u}$
    \Else
        \State $\underline{u} \gets u$
    \EndIf
\EndWhile
\State \Return $\overline{u}, \{n^{(U)}_{i,j}\}$
\end{algorithmic}
\end{algorithm}

\noindent
\textbf{Complexity.}  
Given the upper bound $M$ and convergence threshold $\varepsilon$, ABSM performs at most $\mathcal{O}(\log_2(M/\varepsilon))$ iterations. Each iteration consists of simple element-wise operations—vectorized division, max, and summation—over $N \times N$ matrices. These operations are highly parallelizable and execute in near-constant time on modern hardware. Thus, the total runtime is dominated by $\mathcal{O}(\log(M/\varepsilon))$ lightweight steps, ensuring excellent scalability even for large DCN.

\subsection{RO Component}
\label{sec:ro-component}

The routing optimization (RO) subproblem is a standard traffic engineering (TE) task under fixed topology. It can be directly solved via linear programming (LP); however, LP solvers often incur high computational cost at large scale. To enhance scalability, we adopt recent TE accelerators that substantially reduce runtime without sacrificing solution quality.

\noindent
\textbf{Learning-based TE accelerators.}
Deep learning methods such as Figret~\cite{liu_figret_2023} and DOTE~\cite{perry_dote_nodate} achieve fast inference but are typically limited to fixed logical topologies. More recent techniques, including FNC\cite{Yang_Learning}, RedTE~\cite{redte}, and HARP~\cite{alqiamTransferableNeuralWAN2024}, have extended this capability to partially variable topologies, offering potential integration into hybrid systems.

\noindent
\textbf{Decomposition-based TE accelerators.}
These approaches partition the TE problem into tractable subproblems that can be solved efficiently in sequence or parallel. Notable examples include POP~\cite{narayanan_solving_2021}, LP-top~\cite{xu_teal_2023}, and SSDO~\cite{mao2025fastsolverfreealgorithmtraffic}. Among them, SSDO offers a compelling balance of speed and quality. It is solver-free, supports warm-start from previous solutions, and delivers near-optimal results, making it well-suited for integration into ATRO.

\noindent
\textbf{ATRO Integration.}
While ATRO is compatible with any TE algorithm, we adopt SSDO in all subsequent experiments due to its efficiency and natural alignment with our alternating framework. The RO module remains modular and can readily integrate more advanced or domain-specific TE methods as they become available.

\subsection{Refinement Component}
\label{sec:refinement}
ATRO converges to a fixed-point solution that satisfies two conditions: (1) given the current topology, no routing update can further reduce the MLU, and (2) given the current routing, no topology adjustment can further reduce the MLU without violating port constraints. While such a solution is both feasible and stable, it may still be suboptimal. This is because the topology optimization (TO) subproblem often admits multiple  distinct optimal solutions that yield the same MLU. However, some of these topologies may severely constrain the routing optimization (RO) subproblem by limiting path diversity and underutilizing available network capacity, thereby reducing its ability to distribute traffic effectively.

To address this, we introduce a lightweight refinement mechanism that selectively increases routing flexibility. The key idea is to identify high-load logical links and augment them with additional connections where possible, provided both endpoints have idle ports. Fig.~\ref{fig:add-link} illustrates this in a 3-PoD example. In the initial topology (Fig.~\ref{fig:add-link-b}), PoD A connects to both B and C with one link each and routes traffic directly. Due to port limits, only one extra link can be added, and neither A--B nor A--C improves the MLU under fixed routing, leading to a local optimum. The refinement step adds links between A--C and B--C (Fig.~\ref{fig:add-link-c}), which unlocks better routing options (Fig.~\ref{fig:add-link-d}) by relaying through PoD C, reducing MLU from 0.2 to 0.15. This strategy preserves feasibility while expanding the solution space for subsequent routing updates.

The refinement procedure is detailed in Algorithm~\ref{alg:topo-refine}. It computes the current link utilization, ranks PoD pairs by descending utilization, and iteratively adds links where both sides have residual port capacity. Importantly, this process does not alter existing traffic allocations or exceed port limits. It is applied immediately after each TO update, preserving feasibility while strategically improving routing flexibility. In practice, this helps ATRO escape local optima and achieve more balanced traffic distributions.
\begin{figure}[thb]
    \centering
    \begin{subfigure}[b]{0.4\linewidth}
        \includegraphics[width=\linewidth]{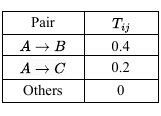}
        \caption{Traffic Demand}
        \label{fig:add-link-a}
    \end{subfigure}
    \hfill
    \begin{subfigure}[b]{0.4\linewidth}
        \includegraphics[width=\linewidth]{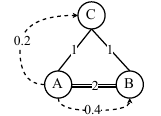}
        \caption{Local Optima}
        \label{fig:add-link-b}
    \end{subfigure}

    \begin{subfigure}[b]{0.4\linewidth}
        \includegraphics[width=\linewidth]{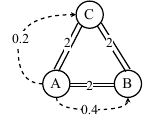}
        \caption{Topology Refinement}
        \label{fig:add-link-c}
    \end{subfigure}
    \hfill
    \begin{subfigure}[b]{0.4\linewidth}
        \includegraphics[width=\linewidth]{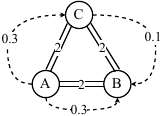}
        \caption{New Optima}
        \label{fig:add-link-d}
    \end{subfigure}
    
    \caption{Illustration of  topology refinement in a 3-PoD example. 
Each PoD has a port budget of 4, and each logical link has unit capacity. 
Solid lines represent bidirectional logical links, annotated with the number of connections. 
Dashed arrows represent traffic loads, annotated with flow volumes.}

    \label{fig:add-link}
\end{figure}

\begin{algorithm}[tbh]
\caption{Traffic-Aware Topology Refinement (Refine)}
\label{alg:topo-refine}
\begin{algorithmic}[1]
\Require Topology matrix $n_{i,j}$, Traffic load $T_{i,j}$, link capacities $S_{i,j}$, port limits $R_i$
\Ensure Refined topology $n_{i,j}$
\State Compute used ports: $R_c[i] \gets \sum_{j \ne i} n_{i,j}$
\State Compute remaining ports: $R_{\text{remain}}[i] \gets R_i - R_c[i]$
\State Compute utilization: $u_{i,j} \gets T_{i,j} / S_{i,j}$
\State Let $\texttt{SortedPairs} \gets$ PoD pairs $(i,j)$ sorted by $u_{i,j}$ in descending order
\For{each $(i,j) \in \texttt{SortedPairs}$}
    \If{$i = j$ or $R_{\text{remain}}[i] = 0$ or $R_{\text{remain}}[j] = 0$}
        \State \textbf{continue}
    \EndIf
    \State $\Delta \gets \min(R_{\text{remain}}[i], R_{\text{remain}}[j])$
    \State $n_{i,j} \gets n_{i,j} + \Delta$, $n_{j,i} \gets n_{j,i} + \Delta$
    \State $R_{\text{remain}}[i] \gets R_{\text{remain}}[i] - \Delta$, $R_{\text{remain}}[j] \gets R_{\text{remain}}[j] - \Delta$
\EndFor
\State \Return $n_{i,j}$
\end{algorithmic}
\end{algorithm}

\subsection{ATRO Algorithm Summary} \label{sec:atro-summary}

ATRO proceeds in an iterative loop with three modular components—\textit{TO}, \textit{Refinement}, and \textit{RO}—each addressing a distinct aspect of the joint optimization. Starting from an initial traffic load estimate, the TO component solves for a feasible topology; the Refinement step reallocates unused ports to improve balance; and the RO component computes optimal routing under the current topology. The procedure repeats until convergence, with each step maintaining feasibility and improving MLU.

ATRO presents numerous practical advantages. \begin{itemize}
    \item \textbf{Any-time usability.} Every intermediate solution remains feasible, allowing early termination with valid topology and routing, adaptable to varying run-time budgets.

    \item \textbf{Warm-start and hybrid integration.} ATRO supports initialization from arbitrary feasible states, and can be combined with external heuristics or prior solutions enabling efficient online re-optimization and collaborative algorithm design.

    \item \textbf{Solver-free and lightweight.}  
    By decomposing the TRO into combinatorial TO and continuous RO, ATRO eliminates the need for general-purpose solvers. TO is solved via binary search with closed-form checks, and with solver-free TE methods, RO can be executed without solvers, enabling full-pipeline scalability.

\end{itemize}

\noindent
\textbf{Special Case: One-Hop Topology Optimization.}  
In one-hop scenarios where routing is fixed to direct paths, only the TO subproblem is relevant. In this setting, ATRO reduces to a single call to the TO Component (i.e., ABSM), which efficiently computes the optimal one-hop topology via binary search. This highlights ATRO's versatility: it serves both as a high-speed optimizer for one-hop reconfiguration and as a general framework for multi-hop DCNs.

While ATRO supports arbitrary feasible initializations for routing, we empirically find that using direct-path routing consistently leads to faster convergence and better final performance. Unless otherwise specified, all evaluations in this paper adopt direct-path initialization by default. The complete procedure is summarized in Algorithm~\ref{alg:atro}.

\begin{algorithm}[tbh]
\caption{ATRO: Alternating Topology and Routing Optimization}
\label{alg:atro}
\begin{algorithmic}[1]
\Require Traffic demand $D_{i,j}$, port limits $R_i$, link capacity $S_i$
\Ensure Topology $n_{i,j}$, routing $f_{i,j,k}$, and maximum utilization $u$
\State\textbf{Initialize routing:} any feasible $f^{(0)}_{i,j,k}$ is allowed
\Statex \quad \textit{(e.g., direct path routing: $f^{(0)}_{i,j,k} \leftarrow 1$ if $k = j$, else $0$, for $i \ne j$)}
\State \textbf{Compute initial traffic:} 
\Statex \quad $T^{(0)}_{i,j} \gets \sum_{j'} f^{(0)}_{i,j',j} \cdot D_{i,j'} + \sum_{i'} f^{(0)}_{i',j,i} \cdot D_{i',j}$
\State \textbf{Initialize utilization:} $u^{(0)} \gets +\infty$, \quad $t \gets 0$

\Repeat
    \State \textbf{TO Component:} 
    \Statex \quad $n^{(t+1)}_{i,j} \gets \texttt{ABSM}(T^{(t)}_{i,j}, R_i, S_{i,j})$
    
    \State \textbf{Refinement Component:}
    \Statex \quad $n^{(t+1)}_{i,j} \gets \texttt{Refine}(n^{(t+1)}_{i,j}, T^{(t)}_{i,j}, S_{i,j}, R_i)$
    
    \State \textbf{RO Component:} 
    \Statex \quad $(f^{(t+1)}_{i,j,k}, u^{(t+1)}) \gets \texttt{SSDO}(D_{i,j}, n^{(t+1)}_{i,j}, S_{i,j})$
    
    \State \textbf{Traffic Update:}
    \Statex \quad $T^{(t+1)}_{i,j} \gets \sum_{j'} f^{(t+1)}_{i,j',j} \cdot D_{i,j'} + \sum_{i'} f^{(t+1)}_{i',j,i} \cdot D_{i',j}$
    \State $t \gets t+1$
\Until{$|u^{(t)} - u^{(t-1)}| < \varepsilon$}
\end{algorithmic}
\end{algorithm}

\section{Numerical Test}

We evaluate ATRO using the experimental setup described in \S\ref{sec:methodology}, focusing on two representative scenarios: one-hop settings, where only direct path are considered (\S\ref{sec:one_hop}) , and multi-hop  settings, which require joint optimization of topology and routing over longer timescales (\S\ref{sec:muti_hop}). For both cases, we compare ATRO against solver-based and heuristic baselines in terms of solution quality (MLU) and computation time. We further analyze its convergence behavior (\S\ref{sec:convergence}), warm-start capabilities, and the contribution of components through ablation studies (\S\ref{sec:Ablation}).

\subsection{Methodology} \label{sec:methodology}

\noindent\textbf{Topologies.} 
We evaluate ATRO on two categories of topologies: Meta’s production DCNs~\cite{roy_inside_2015}, including PoD and ToR levels, and four synthetically generated full-mesh topologies. As in prior studies~\cite{han2025highlyscalablellmclusters,tehEnablingQuasiStaticReconfigurable2023}, all logical links are assumed to have the same fixed capacity across the network during evaluation. Summary statistics are shown in Table~\ref{tab:NETWORK TOPOLOGIES}.

\begin{figure}[t]
    \centering
    \includegraphics[width=0.85\linewidth]{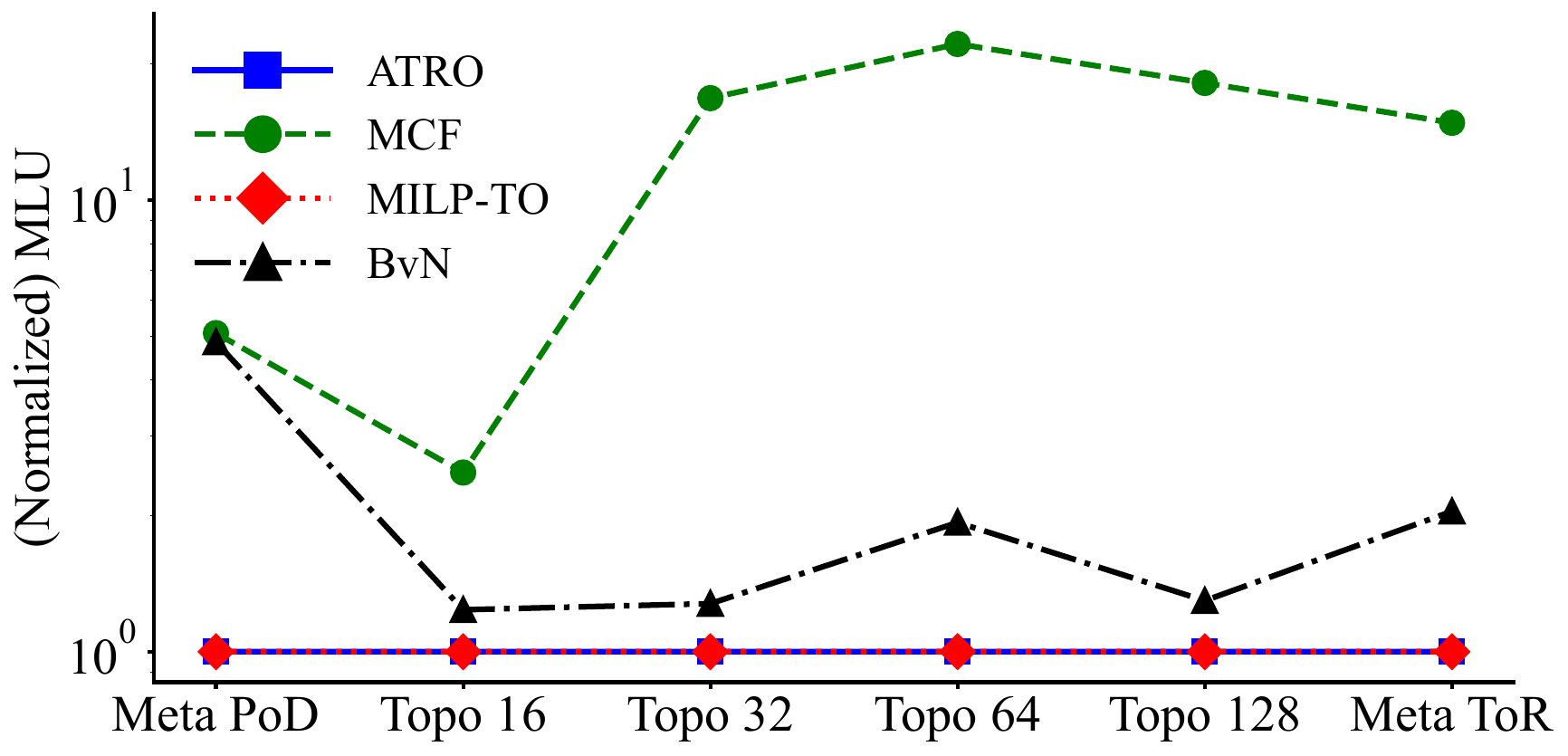}
    \caption{Average normalized MLU of ATRO and baselines on one-hop setting.}
    \label{fig:TE Quality single}
\end{figure}

\begin{table}[ht]
\caption{Network topologies used in evaluation.}
\centering
\begin{tabular}{ccccc}
\toprule
Type & Name & Nodes & Edges & Ports \\
\midrule
\multirow{2}{*}{Meta} & Meta PoD & 4 & 12 & 16 \\
                      & Meta ToR & 155 & 23870 & 256 \\
\midrule
\multirow{4}{*}{Synthetic} & Topo 16 & 16 & 240 & 32 \\
                           & Topo 32 & 32 & 992 & 64 \\
                           & Topo 64 & 64 & 4032 & 128 \\
                           & Topo 128 & 128 & 16256 & 256 \\
\bottomrule
\end{tabular}
\label{tab:NETWORK TOPOLOGIES}
\end{table}

\noindent\textbf{Traffic.}
For Meta topologies, we use a public production trace~\cite{roy_inside_2015}, aggregated at 1-second (PoD) and 100-second (ToR) intervals. For synthetic topologies, we combine AI traffic from \textit{RapidAISim} in~\cite{han2025highlyscalablellmclusters} with gravity model-based background flows~\cite{applegate_making_nodate,roughan_experience_2003}.

\noindent\textbf{Scenarios and Baselines.}
We evaluate ATRO in both one-hop and multi-hop settings, comparing it against representative solver-based and heuristic methods:

\begin{itemize}
\item \textbf{MILP-TO (one-hop):} An oracle baseline for the one-hop TO subproblem~\eqref{eq:to_subproblem} using commercial solver.   
\item \textbf{MCF (one-hop):} A fast approximation using minimum-cost flow models~\cite{zhaoMinimalRewiringEfficient}, where logical links are treated as unit flows. The resulting solution may violate symmetry constraints and is post-processed by downscaling asymmetric allocations to their minimum.
\item \textbf{BvN (one-hop):} A heuristic based on Birkhoff--von Neumann decomposition and disjoint perfect matchings~\cite{liuSchedulingTechniquesHybrid2015}. Like MCF, it does not enforce symmetry and applies a similar minimum-based adjustment.
\item \textbf{MILP (multi-hop):} Directly solves linearized TRO  using Gurobi, achieving optimal MLU but poor scalability.
\item \textbf{COUDER (multi-hop):} A two-stage relaxation-based method that solves a relaxed LP and heuristically reconstructs integer topologies~\cite{tehEnablingQuasiStaticReconfigurable2023}.
\end{itemize}

\noindent\textbf{Implementation.}
Algorithms are implemented in Python 3.8 using Gurobi 9.5.1 for solver-based baselines. Evaluations are conducted on an AMD EPYC 9654 with 384 GB RAM.

\subsection{One-Hop Evaluation (TO Component)}

\begin{figure}[t]
    \centering
    \includegraphics[width=0.85\linewidth]{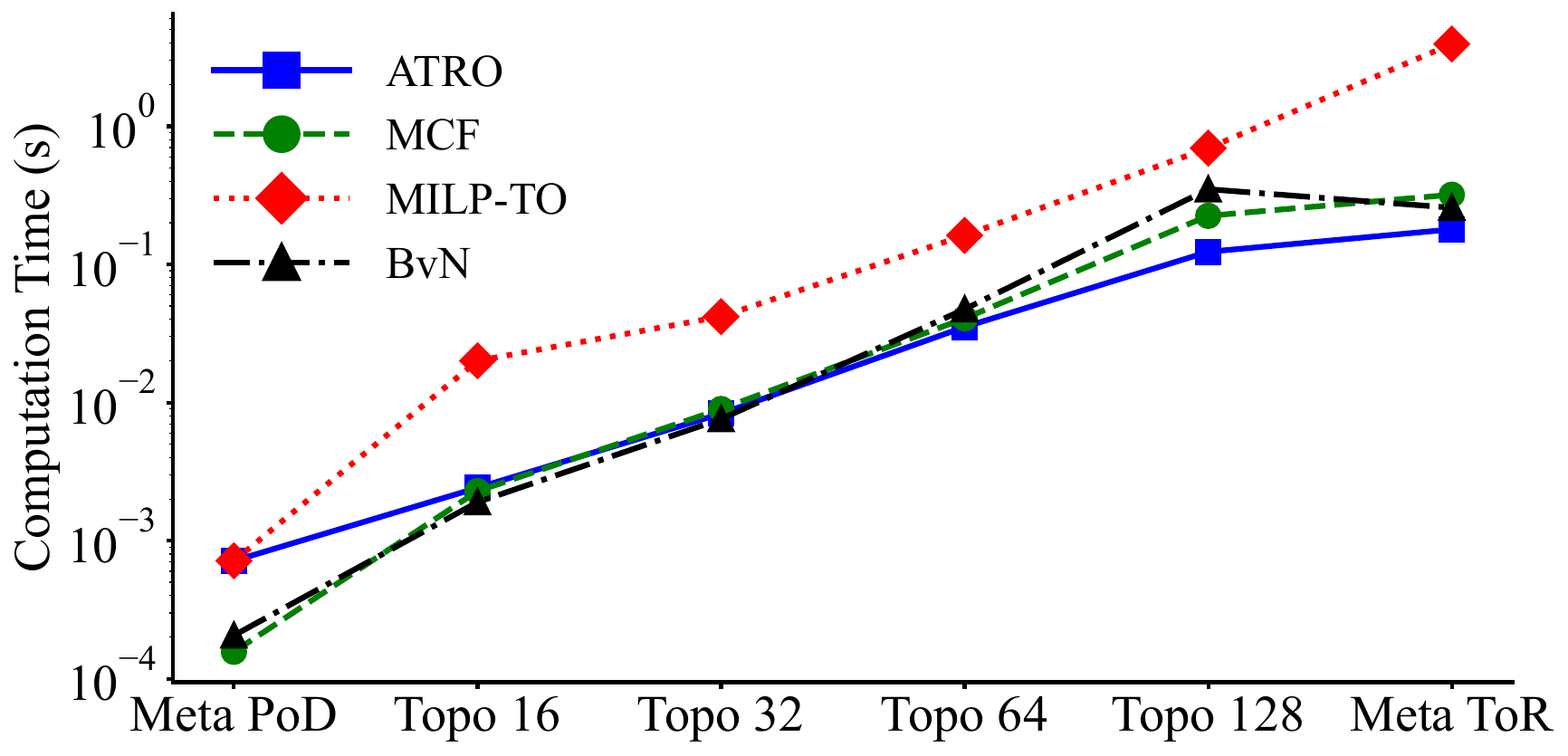}
    \caption{Average computation time of ATRO and baselines on one-hop setting.}
    \label{fig:TE solver time single}
\end{figure}
\label{sec:one_hop}
\noindent
\textbf{Performance Comparison.}
We evaluate ATRO against MILP-TO, MCF, and BvN in one-hop settings, where only the logical topology is optimized. As shown in Fig.~\ref{fig:TE Quality single}, ATRO (which reduces to ABSM in one-hop setting) achieves optimal MLU in all topologies. In contrast, MCF exhibits high variance and degrades significantly with increasing network size, while BvN performs reasonably on sparse cases (e.g., Meta PoD) but deteriorates rapidly in denser topologies. Fig.~\ref{fig:TE solver time single} further shows that ATRO maintains runtime under 100 ms, even on Meta ToR. MILP-TO is increasingly slower, taking over 1 seconds on Meta ToR, while BvN and MCF are fast but offer lower and unstable quality. Overall, ATRO dominates the tradeoff between solution quality and computational efficiency.

\noindent
\textbf{Stress Test.}
ATRO outperforms MILP-TO in one-hop TO problems, highlighting the need for speed in real-time topology reconfiguration despite MILP-TO’s capability to handle moderate sizes quickly. Stress testing with larger topologies (see \cref{tab:EXTENDED_TOPOLOGIES}) shows ATRO maintains optimal MLU (see Fig.\ref{fig:onehop_large_comparison}(a)) and near-second runtimes (see Fig.\ref{fig:onehop_large_comparison}(b)), unlike MILP-TO, which struggles with large-scale topology. These findings demonstrate ATRO's scalability and suitability for latency-sensitive scenarios in one hop scenario.

\begin{table}[ht]
\centering
\caption{Oversized topologies for stress testing.}
\begin{tabular}{ccccc}
\toprule
Type & Name & Nodes & Edges & Ports \\
\midrule
\multirow{2}{*}{Synthetic} & Topo 256 & 256 & 65280 & 512 \\
                          & Topo 512 & 512 & 261632 & 1024 \\
\bottomrule
\end{tabular}
\label{tab:EXTENDED_TOPOLOGIES}
\end{table}

\begin{figure}
    \centering
    \includegraphics[width=1\linewidth]{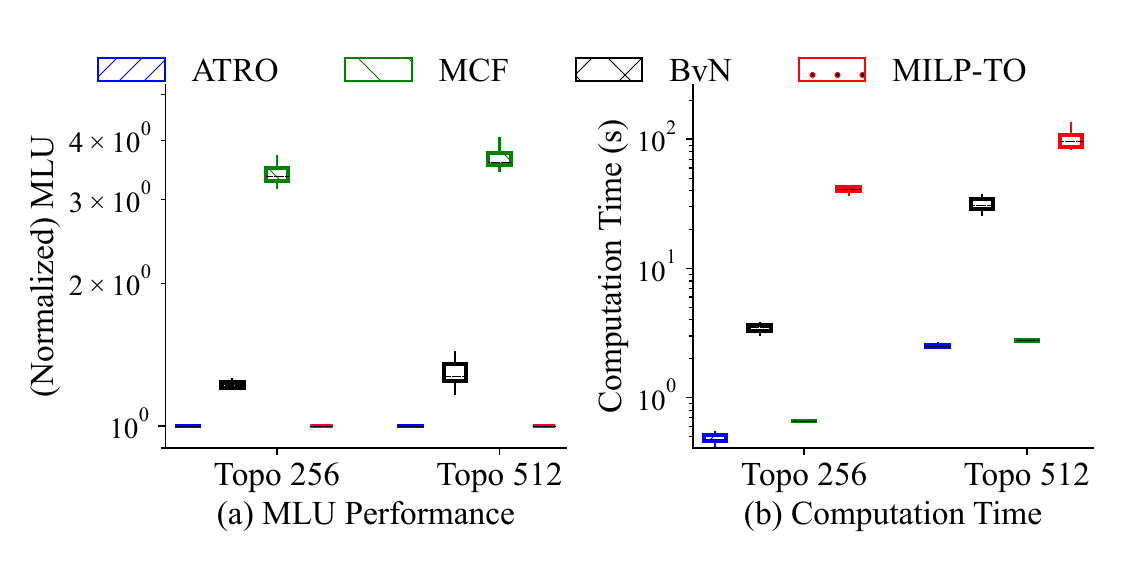}
    \caption{Comparison of ATRO and baselines on oversized topologies: (a) MLU; (b) Computation time.}
    \label{fig:onehop_large_comparison}
\end{figure}

\noindent
\textbf{Link Count Analysis.}
In addition to MLU and runtime, we evaluate the number of logical links each method provisions to satisfy traffic demand. As seen in Fig.~\ref{fig:linkcount}, ATRO consistently provisions the fewest links, thanks to \cref{thm:theorem1}. It minimizes the consumption of limited OCS port resources, leaving more "free" ports to accommodate sudden traffic bursts or background flows. MILP-TO often over-provisions due to solver rounding effects, leading to inefficiencies. MCF and BvN lack explicit link minimization objectives and hence result in inflated configurations, particularly in larger topologies like Topo 64 and 128.

\begin{figure}
    \centering
    \includegraphics[width=0.85\linewidth]{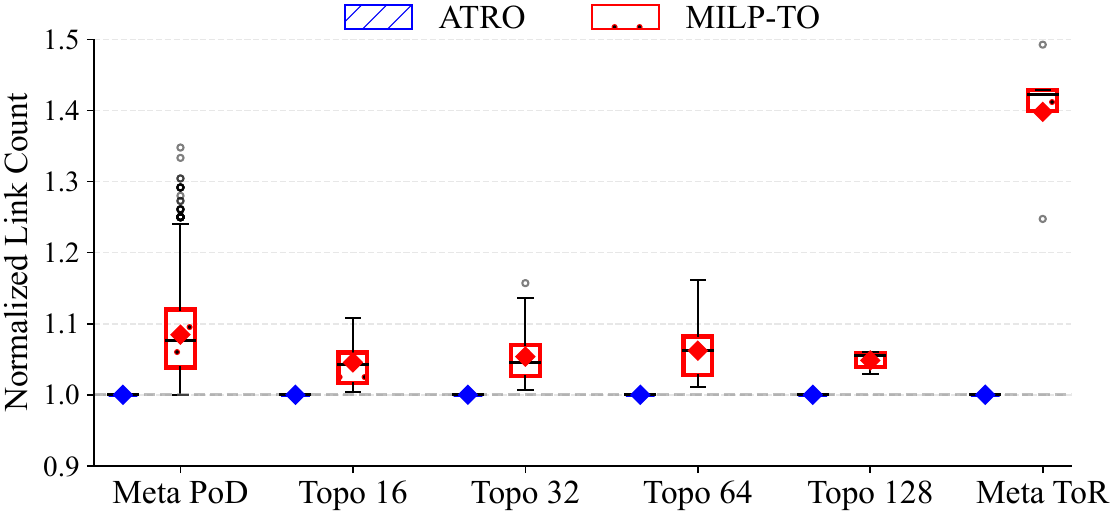}
    \caption{The count of logical links in practical topologies normalized by ATRO (lower is better).}
    \label{fig:linkcount}
\end{figure}

\noindent
\textbf{Summary.}
ATRO offers optimal MLU and compact logical topologies, while maintaining low computation time and scaling effectively to networks with hundreds of nodes. It consistently outperforms MILP and heuristic baselines.

\subsection{Multi-Hop Evaluation (TRO Problem)} 
\label{sec:muti_hop}

We evaluate the full ATRO framework under multi-hop settings, where both logical topology and routing must be jointly optimized. Fig.\ref{fig:TE Quality} and Fig.~\ref{fig:TE solver time} report normalized MLU and average computation time across six topologies. The analysis for each baseline is as follows:

\begin{figure}
    \centering
    \includegraphics[width=0.85\linewidth]{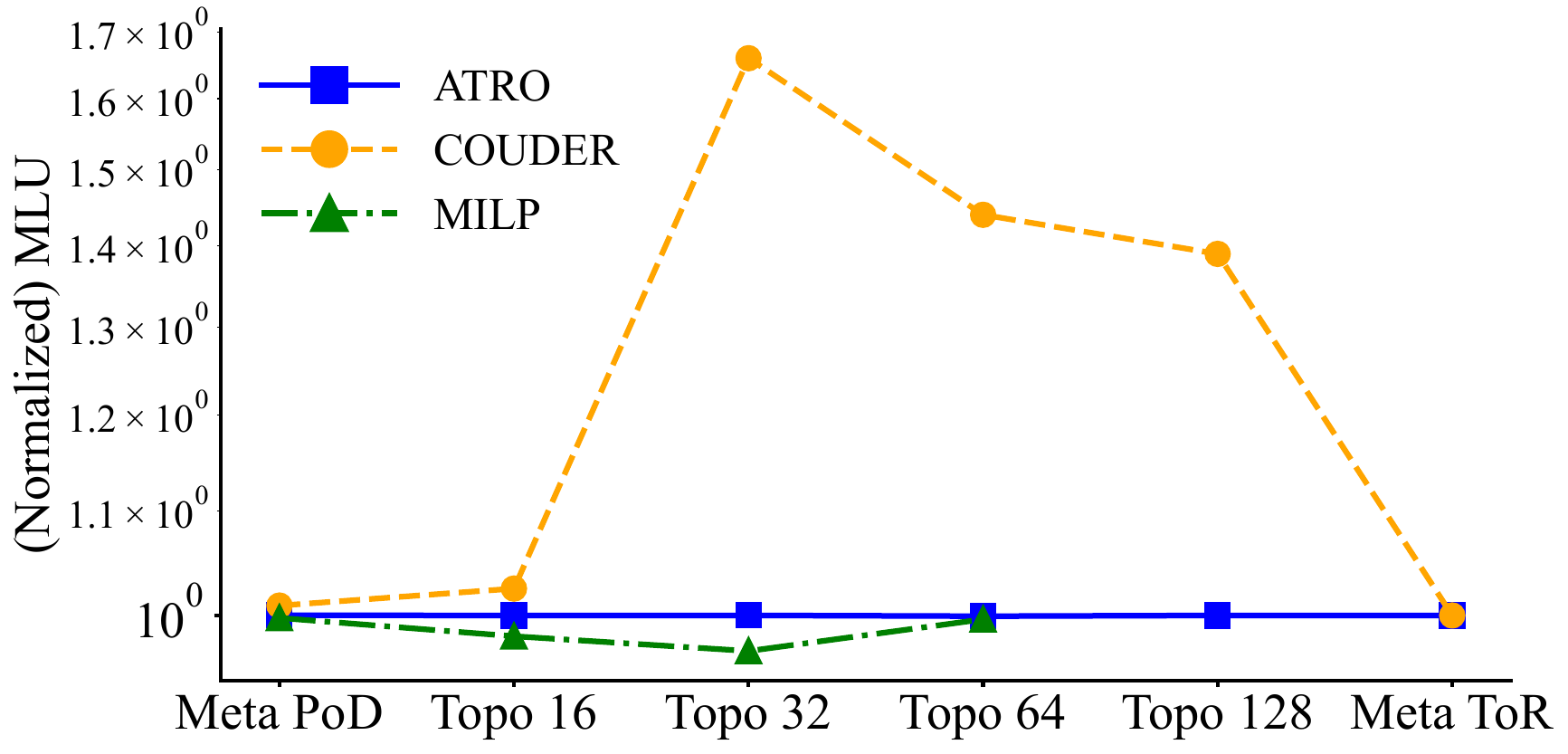}
    \caption{Average normalized MLU of ATRO and baselines in multi-hop setting.}
    \label{fig:TE Quality}
\end{figure}

\begin{figure}
    \centering
    \includegraphics[width=0.85\linewidth]{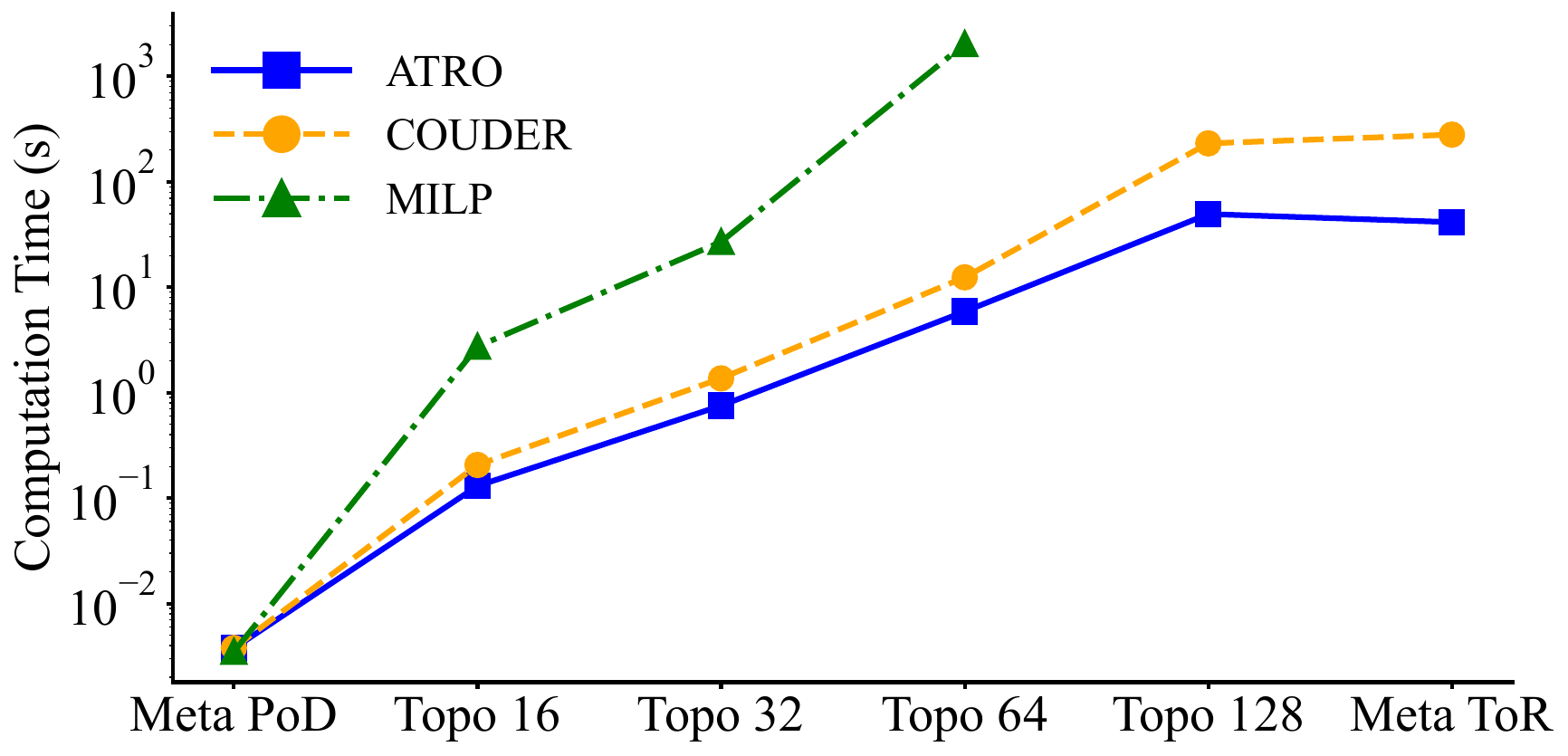}
    \caption{Average computation time of ATRO and baselines in multi-hop setting.}
    \label{fig:TE solver time}
\end{figure}

\noindent
\textbf{MILP (oracle baseline):} 
MILP represents the theoretical optimum by solving the full TRO problem via Gurobi. As shown in Fig.~\ref{fig:TE Quality}, it delivers the best MLU on small topologies like Meta PoD and Topo 16. However, Fig.~\ref{fig:TE solver time} reveals its critical weakness: computation time increases exponentially with scale, becoming intractable beyond Topo 64. On Topo 128 and Meta ToR, it fails to complete within 20,000 seconds.

\noindent
\textbf{COUDER:} 
 COUDER performs reasonably well on small networks but exhibits clear degradation on mid-sized topologies. On Topo 32 and Topo 64, COUDER yields over 1.5\texttimes{} higher MLU than ATRO. An exception occurs on Meta ToR, where its MLU appears competitive. This is due to the extremely skewed traffic pattern in Meta ToR, which admits many near-optimal configurations, allowing even COUDER's coarse topology recovery to perform well. Even so, as seen in Fig.~\ref{fig:TE solver time}, its runtime remains significantly higher than ATRO.

\noindent
\textbf{ATRO:} 
ATRO consistently achieves low MLU while maintaining high efficiency. In Fig.~\ref{fig:TE Quality}, it matches MILP on small topologies and outperforms COUDER significantly on larger ones. In Fig.~\ref{fig:TE solver time}, ATRO's runtime grows gradually, remaining practical even on the largest evaluated topology. On Topo 128, it is up to 5\texttimes{} faster than COUDER. Unlike MILP and COUDER, ATRO can avoid commercial solvers and remains robust across varying traffic characteristics. On Topo 32, ATRO exhibits a slight MLU gap relative to MILP, attributable to the extremely sparse demand matrix: many source-destination pairs require no connectivity. While MILP can prune such links entirely, ATRO’s alternating structure tends to preserve minimal connectivity, as the RO step rarely drives link utilization to zero. This leads to mild suboptimality but ensures solution feasibility and topological stability.

\subsection{Analysis of Convergence Process} 
\label{sec:convergence}

We analyze the convergence behavior of ATRO by tracking normalized MLU over iterations on representative samples from Meta PoD and Topo 16, as shown in Figure~\ref{fig:convergence}. The initial point is obtained by applying the TO component to the input traffic, which often provides a strong starting topology.

ATRO exhibits consistently monotonic improvement, with MLU decreasing at each iteration, as guaranteed in \S\ref{sec:atro_insight}. Most samples converge within one or two rounds, and subsequent iterations yield diminishing improvements—indicating that early stopping is often sufficient. Since both TO and RO components are lightweight (see \S\ref{sec:one_hop}), ATRO can deliver high-quality solutions with minimal delay.

To further quantify convergence efficiency, we measure the number of iterations required for convergence across all samples in six topologies. Figure~\ref{fig:convergence_round_stats} shows that for small to medium-sized topologies (Meta PoD, Topo 16/32/64), over 95\% of samples converge within two iterations. Even in large-scale topologies like Topo 128 and Meta ToR, the majority of cases require no more than three iterations. This empirical evidence confirms ATRO’s scalability and suitability for both low-latency and large-scale deployment.

\begin{figure}[tbh]
    \centering
    \includegraphics[width=0.85\linewidth]{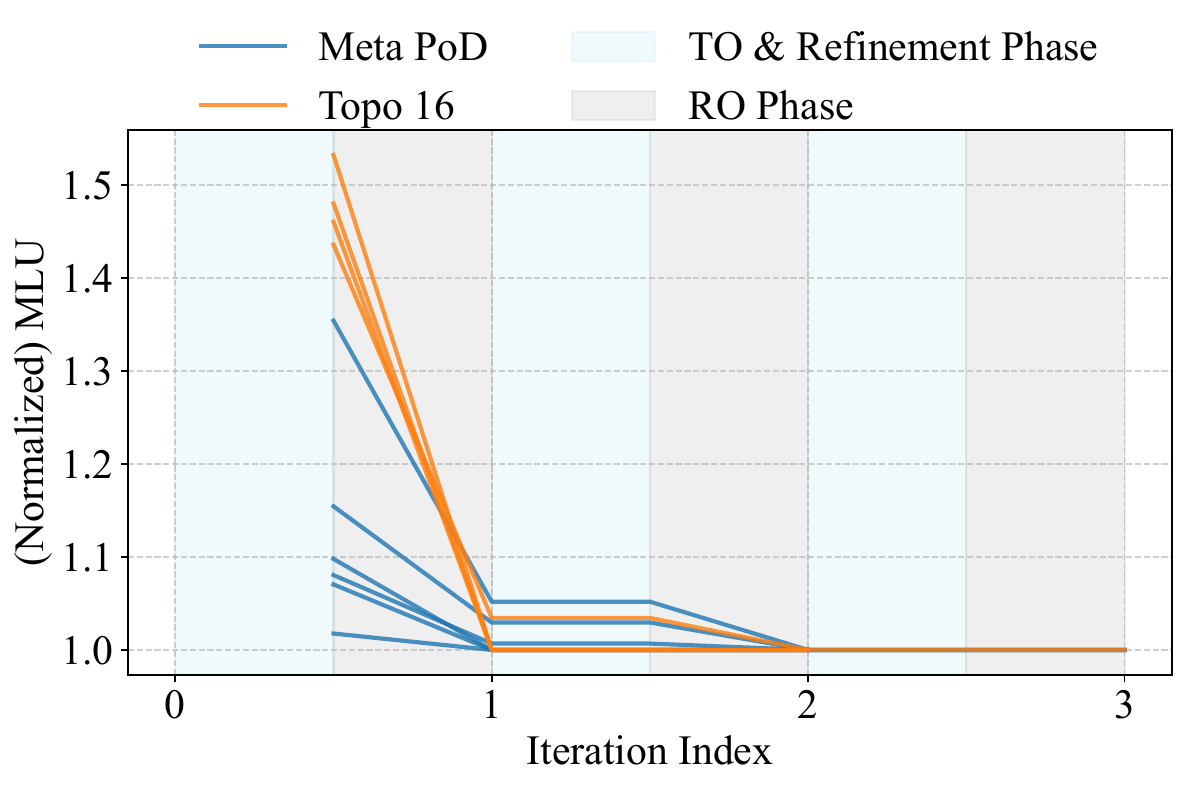}
    \caption{MLU convergence trajectories of ATRO on selected samples from Meta PoD and Topo 16. ATRO converges quickly, often in one or two rounds.}
    \label{fig:convergence}
\end{figure}

\begin{figure}[tbh]
    \centering    
    \includegraphics[width=0.85\linewidth]{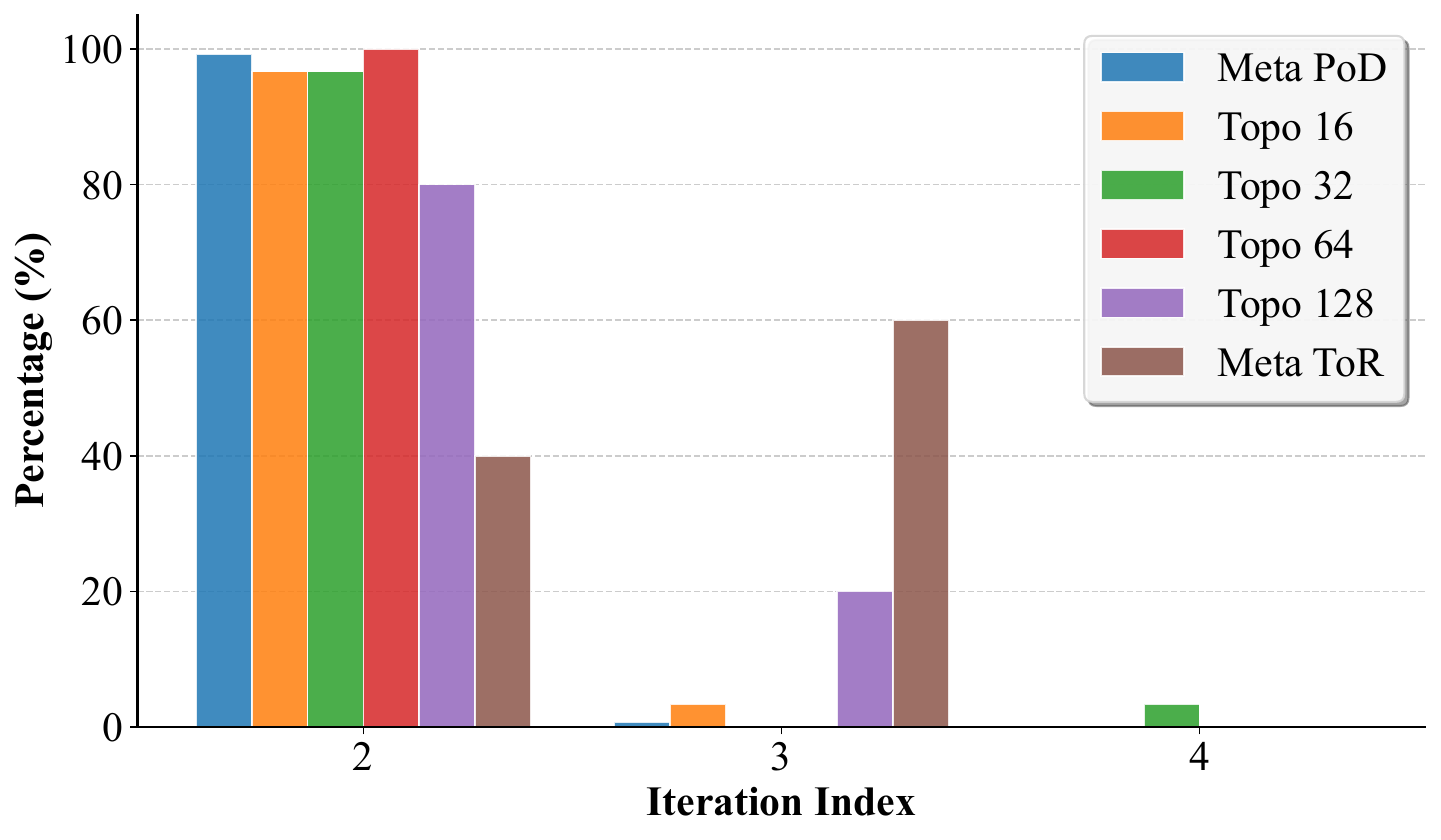}
    \caption{Distribution of convergence rounds across six topologies. Most samples converge within 2–3 iterations, validating the efficiency of the ATRO framework.}
    \label{fig:convergence_round_stats}
\end{figure}

\subsection{Warm Start and Ablation Study}\label{sec:Ablation}

We evaluate the extensibility of ATRO in two aspects: (i) its ability to enhance existing solutions via warm start, and (ii) the effectiveness of its Refinement module.

\noindent\textbf{Warm Start Capability.}  
ATRO accepts any feasible initialization, allowing integration with heuristic or learned solutions. We evaluate two warm-start variants:
\begin{itemize}
    \item \textbf{ATRO-T:} Initialized with COUDER’s topology (bypassing the first TO step).
    \item \textbf{ATRO-R:} Initialized with COUDER’s LP-based routing solution (bypassing the first TO and RO step).
\end{itemize}
As shown in Table~\ref{tab:atro_variants}, both ATRO variants achieve comparable MLU to COUDER across most topologies. On Meta ToR, where the sparse traffic allows many near-optimal solutions, all methods yield similar MLU. Nevertheless, ATRO maintains strong runtime advantages even in this setting.

\noindent\textbf{Effect of the Refinement Component.}  
The \textit{Refinement Component} reallocates residual ports after TO, expanding the solution space and enhancing routing flexibility in subsequent iterations. When this step is disabled—as in the ATRO-O variant—MLU performance degrades in several topologies, particularly those with some AI traffic patterns (e.g., Topo 32 and Topo 64), where routing flexibility is more critical. In contrast, its impact is less pronounced in topologies like Topo 128, where many configurations already satisfy capacity constraints. Nonetheless, Refinement introduces minimal computational overhead and consistently improves robustness, making it a low-cost yet effective enhancement that we recommend by default.

\begin{table}[tbh]
    \caption{Average normalized MLU of COUDER and ATRO variants (values normalized to ATRO).}
    \centering
    \resizebox{\linewidth}{!}{%
        \begin{tabular}{l S S S S}
            \toprule
            Topology & {COUDER} & {ATRO-T} & {ATRO-R} & {ATRO-O} \\
            \midrule
            Meta PoD   & 1.009 & 0.999 & 1.000 & 1.017 \\
            Topo 16    & 1.025 & 1.022 & 1.003 & 1.049 \\
            Topo 32    & 1.660 & 1.287 & 1.027 & 1.007 \\
            Topo 64    & 1.440 & 1.237 & 1.070 & 1.001 \\
            Topo 128   & 1.389 & 1.368 & 1.145 & 1.000 \\
            Meta ToR   & 1.000 & 1.000 & 1.000 & 1.000 \\
            \bottomrule
        \end{tabular}
    }
    \label{tab:atro_variants}
\end{table}







\section{Related Work}

\noindent
\textbf{Reconfigurable DCN Designs.}
Reconfigurable DCNs dynamically adjust logical topologies via optical circuit switches (OCSs) to match traffic demands. Two main architectures have emerged: \textit{multi-hop} and \textit{one-hop}. Multi-hop systems like Jupiter Evolving~\cite{poutievskiJupiterEvolvingTransforming2022} maintain a relatively fixed topology and rely on routing adaptation, requiring joint topology and routing optimization (TRO). In contrast, one-hop systems like RotorNet~\cite{melletteRotorNetScalableLowcomplexity2017} and Sirius~\cite{ballaniSirius2020} reconfigure direct PoD-to-PoD connections per scheduling interval, focusing solely on fast topology optimization (TO). These architectural differences naturally lead to distinct scheduling strategies.

\noindent
\textbf{Scheduling Algorithms for Reconfigurable DCNs.}
In multi-hop systems, COUDER~\cite{tehEnablingQuasiStaticReconfigurable2023} formulates TRO via LP relaxation and rounding. TROD~\cite{caoTRODEvolvingElectrical2021} avoids solvers by first estimating topology based on link-load quantiles and then applying threshold-based splitting for routing decisions; this decoupled approach can be viewed as a heuristic approximation to ATRO, which may lead to capacity waste. One-hop systems prioritize rapid topology computation. Many leverage Birkhoff–von Neumann (BvN) decomposition~\cite{liuSchedulingTechniquesHybrid2015} to express traffic matrices as disjoint matchings, enabling low-latency scheduling. Earlier systems like Helios~\cite{farringtonHeliosHybridElectrical2010} rely on repeated bipartite matchings, incurring high computational overhead that limits scalability.


\section{Conclusion}

We present ATRO, a modular framework for computing logical topologies in reconfigurable data center networks (DCNs). In the general multi-hop setting, ATRO alternates between topology optimization (TO) and routing optimization (RO) using lightweight, scalable subroutines. The TO step is solver-free and solved optimally via our proposed Accelerated Binary Search Method (ABSM), while the RO step supports both LP solvers and TE accelerators—enabling fully solver-free execution if desired. Extensive experiments show that ATRO matches or outperforms existing baselines in both performance and runtime, achieving low-latency scheduling in one-hop settings and efficient scalability in large-scale, multi-hop scenarios. Its convergence-guaranteed, plug-and-play design with warm-start and hybrid support makes ATRO well-suited for real-time, dynamic DCNs. In future work, we will enhance ATRO’s robustness to traffic variations.

\newpage

\bibliographystyle{IEEEtran}
\bibliography{reference}

\end{document}